\author{Umberto Cherubini \and Fabio Gobbi  \and Sabrina
Mulinacci \and Silvia Romagnoli
\\ University of Bologna - Department of Statistics}
\newtheorem{theorem}{Theorem}[section]
\newtheorem{lemma}{Lemma}[section]
\newtheorem{proposition}{Proposition}[section]
\newtheorem{definition}[theorem]{Definition}
\newtheorem{example}{Example}[section]
\newtheorem{remark}{Remark}[section]
\begin{document}

\title{Granger Independent Martingale Processes}
\date{}

\maketitle

\begin{abstract} We introduce a new class of processes for the evaluation of multivariate equity derivatives. The proposed setting is well suited for the application of the standard copula function theory to processes, rather than variables, and easily enables to enforce the martingale pricing requirement. The martingale condition is imposed in a general multidimensional Markov setting to which we only add the restriction of no-Granger-causality of the increments (Granger-independent increments).
We call this class of processes GIMP (Granger Independent
Martingale Processes). The approach can also be extended to the
application of time change, under which the martingale restriction
continues to hold. Moreover, we show that the class of GIMP
processes is closed under time changing: if a Granger independent
process is used as a multivariate stochastic clock for the change
of time of a GIMP process, the new process is also GIMP. \end{abstract}

\bigskip

{\bf Keywords}: Option pricing, Granger causality, copula
functions, Garch processes
\medskip

{\bf Mathematics Subject Classification (2010)}: 60G44 60J25 91G20
\medskip

{\bf JEL Classification}: G13  G11

\section{Introduction}

Multivariate equity options are largely used both in structured
finance and index-linked life insurance policies. Their payoffs
depend on the application of an aggregation function to a set of
underlying asset prices. Examples are altiplano notes that promise
a payoff if all assets are above a given threshold, Everest notes
that use the $\min (X_1,X_2,\ldots , X_m)$ as the aggregation
function, or basket options that use arithmetic average as
aggregation function.
Copula functions have been widely applied to the evaluation of these products. In the typical application, both in the industry and the literature, marginal distributions are calibrated on univariate option prices, and the copula is estimated from the time series of the underlying assets. The first proposals of these models go back to Rosenberg (1999) and Cherubini and Luciano (2002). Rosenberg(2003) provided a risk neutral pricing model in a static setting. Finally, Van den Goorbergh et al. (2005) extended the model allowing for time varying dependence. Their approach is very close to ours, since they extend the no-arbitrage assumption to a dynamic setting by imposing a condition on the information structure that we call no-Granger causality in this paper.\\
The main contribution of this paper is to provide a systematic
analysis of the properties of the class of processes used in Van
den Goorbergh et al. (2005). We call these processes
Granger-Independent Martingale Processes (GIMP). We will show that
the time change technique can be applied to this class of
processes, and that Granger independent processes are endowed with
the same closure property as L\'evy processes. Namely, if one uses
a Granger-independent process as a stochastic clock for the time
change of a Granger-independent process, he produces a process
that is part of the same class. Moreover, the time change
technique enables to extend the applicability of the model to a
wider set of processes.
One can in fact generate a Granger independent martingale process and apply an arbitrary stochastic clock to it, so that the martingale property is preserved even though the stochastic clock is not 
part of the class of Granger independent processes.

The paper is organized as follows. In section \ref{dynamics} we
present a general multivariate arbitrage free model in the spirit
of coupling marginal arbitrage free price processes. In section
\ref{tc} we will address the issue of time change, proving that
the class of GIMP is invariant under change of time. Section
\ref{conclusione} concludes.

\bigskip

\section{A multivariate model for price dynamics}\label{dynamics}

We now describe the model for the market price dynamics of asset
returns. Let $\left\{ \mathbf{\Omega },\mathcal F, \left (\mathcal
F _{t}\right )_{t\in \mathbb N},\mathbb P \right\} $ be the
underlying filtered probability space. The setting is
multivariate, so that we denote with $X_t^j$, $j=1,2,...,m$, the
log-prices of assets in the economy at time $t$ and with
$\mathbf{X}=(X^1,...,X^m)$ the multidimensional stochastic
process. Correspondly, we denote with $\mathcal (F_{t}^j)_{t\in
\mathbb N}$ the filtration collecting the information generated by
the history of asset $j$ up to time $t$ and with $\left (\mathcal
F^X_t\right )_{t\in \mathbb N}$ the filtration generated by the
multidimensional stochastic process $\mathbf{X}
$.\\

\subsection{Granger independent processes}\label{multimartingale}

In the arbitrage-free pricing theory, it is required that the
underlying discounted prices are martingale with respect to the
filtration containing information generated by all the assets in
the basket. In models based on the specification of the joint
distribution (such as the multivariate generalization of the Black
and Scholes model in the seminal Johnson, 1987 and Margrabe, 1978
papers) the martingale property is included from the very start,
at the cost of making calibration of the marginal distributions
more difficult. In copula-based models, the martingale condition
should be impounded in the model once the martingale marginal
processes have been specified. We are going to show that this is
linked to a hypothesis that is very well known in econometrics,
and is called no-Granger causality. No-Granger-causality means
that no information can help to predict the future values of a
variable over and above its past history.

\begin{definition}\label{granger}
$X^1,\ldots, X^{i-1}, X^{i+1},\ldots , X^m$ do not Granger cause
$X^i$ if and only if
$$
\mathbb P[X^i_{t+1}\leq x\vert\mathcal F^X_{t}]=\mathbb
P[X^i_{t+1}\leq x\vert\mathcal F^i_{t}]
$$
for any $t\in \mathbb N$ and $x$, or, equivalently, in terms of
increments if and only if
\begin{equation}\label{inc1}
 \mathbb P(\Delta X^k_{t}\leq x\vert \mathcal F^X_t)= \mathbb P(\Delta X^k_{t}\leq x\vert \mathcal F^k_t),
\end{equation}
where $\Delta X^k_{t}=X^k_{t+1}-X^k_t$, for any $t\in \mathbb N$
and $x$.
\end{definition}
The absence of Granger causality induces that, if the martingale
property or the Markov property are satisfied by each process with
respect to its own filtration, then they also hold with respect to
the filtration generated by the whole multidimensional process. As
for the martingale property, this is obvious since the no-Granger
causality says that the distributions of $X^i_{t+1}$ conditioned
on $\mathcal F^i_{t}$ and $\mathcal F^X_{t}$ are the same.

Intuitively, (\ref{inc1}) implies that, while the marginal
conditional distribution of increments only depends on the
corresponding marginal filtration, the conditional dependence
structure of the vector of increments could depend on the whole
filtration $\left (\mathcal F^X_t\right )$.\\
The following is a particular specification of (\ref{inc1}).


\begin{definition} We say that a multidimensional stochastic process $\mathbf{X}=(X^1,\ldots ,X^m)$
has \emph{Granger} independent increments if for any $k=1,\ldots
,m$
$$\mathbb P(\Delta X^k_{t}\leq x\vert \mathcal F^X_t)=
\mathbb P(\Delta X^k_{t}\leq x),$$ where $\Delta
X^k_{t}=X^k_{t+1}-X^k_t$, $t\in \mathbb N$.
\end{definition}

This notion of independence is in the middle between
\emph{componentwise independence} of increments
$$\mathbb P(\Delta X^k_{t}\leq x\vert \mathcal F^k_t)=
\mathbb P(\Delta X^k_{t}\leq x),$$ and \emph{vector independence}
of increments
$$\mathbb P(\Delta X^1_{t}\leq x_1,\ldots ,\Delta X^m_{t}\leq x_m\vert \mathcal F^X_t)
= \mathbb P(\Delta X^1_{t}\leq x_1,\ldots ,\Delta X^m_{t}\leq
x_m).$$
\medskip

While vector independence requires that the \emph{joint}
distribution of increments be independent of the levels, Granger
independence only requires that the \emph{marginal} distribution
of the increments be independent of them.

Thanks to Sklar's theorem (see Nelsen,2006), it is now immediate
to convince oneself that if the $\mathbb R^m$-valued stochastic
process $\mathbf{X}$ has Granger independent increments and the
conditional copula of the vector of increments $(\Delta
X^1_{t},\ldots ,\Delta X^m_{t})$ given $\mathcal F^X_t$ is
independent of $\mathcal F^X_t$, then the process exhibits vector
independent increments. Moreover, it is likewise obvious that, if
the conditional copula of the vector of increments $(\Delta
X^1_{t},\ldots ,\Delta X^m_{t})$ given $\mathcal F^X_t$ coincides
with the copula of the vector of increments $(\Delta
X^1_{t},\ldots ,\Delta X^m_{t})$ conditioned on
$\mathbf{X}_t=(X^1_t,\ldots , X^m_t)$, than the stochastic process
$\mathbf{X}$ is a multidimensional Markov process.
\medskip

Multidimensional Granger independent processes are very well
suited to construct multidimensional exponential martingales, that
is to model assets prices with respect to the risk neutral pricing
measure. In fact, each unidimensional exponential process, being
endowed with independent log-increments, satisfies the martingale
property with respect to its natural filtration, once it is
normalized with respect to its mean. Now, thanks to no-Granger
independence, the process automatically turns out to be a
multidimensional martingale. This fact justifies the following
definition

\begin{definition} We define Granger-independent martingale process (GIMP) a
multivariate process with Granger independent log-increments in
which each univariate process is a martingale with respect to its
own natural filtration.
\end{definition}

The above approach has been already applied in literature in the case of multivariate Garch processes and an equivalent martingale probability introduced, see van der Goorberg et al. (2005). Here below we analyze the same 
case showing that the
 equivalent martingale change of probability preserves the no-Granger causality.

\begin{example}\label{garchex}
\emph{\textbf{Copulas for GARCH martingale processes.}} In this
example we analyze a specific model in which the
increments of each process follow a GARCH dynamics. \\
More precisely, let $t$ from $0$ to $N>0$ and denote with $\Omega
^j$ the set of all possible trajectories from time $0$ to time $N$
of the process $X^j$. Let $\Omega=\Omega
^1\times\cdots\times\Omega ^m$ be the set of all multidimensional
paths. $\mathcal F^j_t$ and $\mathcal F ^X$ denote, as above, the
filtration generated by the process $X^j$ and by the whole market
$\mathbf{X}$, respectively. We consider the probability space
$\left (\Omega,\mathcal F^X_N,\mathbb P\right )$ where $\mathbb P$
denotes the objective probability and we call $\mathbb P^j$ the
projection of $\mathbb P$ on $\left (\Omega ^j,\mathcal
F^j_N\right )$. As for the dynamics of the processes, we assume
that the increment $Y^j_t$ of each $X^j$ from time $t-1$ to time
$t$ follows, with respect to $\mathbb P^j$, a dynamics of type
$$\begin{aligned}
   &Y^j_t=\mu^j_t-\frac{(H^j_t)^2}2+H^j_tZ^j_t\\
&(H^j_t)^2=\omega^j _0+\omega^j_1(H^j_{t-1})^2+\omega^j _2(Y^j_{t-1})^2\\
&Z^j_{t}\sim N(0,1)\text{ i.i.d.}
  \end{aligned}
$$
with $\mu^j_t$ $\mathcal F^j_{t-1}$-adapted and $\omega^j_i$
positive constants for $i=0,1,2$. Moreover we assume that
$$\begin{aligned}
&\mathbb P\left (Y^1_t\leq y^1,\ldots ,Y^m_t\leq y^m\vert Y^1_1=z^1_1,\ldots ,Y^1_{t-1}=z^1_{t-1},\ldots ,Y^m_1=z^m_1,\ldots ,Y^m_{t-1}=z^m_{t-1}\right )=\\
&=C_{t\vert Y^1_1=z^1_1,\ldots ,Y^1_{t-1}=z^1_{t-1},\ldots
,Y^m_1=z^m_1,\ldots ,Y^m_{t-1}=z^m_{t-1}}(F_{t-1}^1(y^1),\ldots
,F_{t-1}^m(y^m))
  \end{aligned}$$
where $F_{t-1}^j(y^j)=\mathbb P^j(Y^j_t\leq y^j\vert
Y^j_1=z^j_1,\ldots ,Y^j_{t-1}=z^j_{t-1} )$ and where we suppose
that each conditional copula
$$C_{t\vert Y^1_1=z^1_1,\ldots ,Y^1_{t-1}=z^1_{t-1},\ldots ,Y^m_1=z^m_1,\ldots ,Y^m_{t-1}=z^m_{t-1}}(u_1,\ldots ,u_m)$$
has a strictly positive density in $(0,1)^m$.\\
This model clearly satisfies the no-Granger assumption.
\medskip

We now introduce the martingale restriction on the marginal
processes, following the general approach in Christoffersen,
Elkamhi and Feunou (CEF, 2010), for which we construct the
multivariate extension. For the sake of simplicity, we assume zero
risk-free rate. CEF (2010) show that for each marginal process $j$
there exists an equivalent probability $\mathbb Q^j\sim \mathbb
P^j$ with respect to which the increments of the log-prices
satisfy
$$\begin{aligned}
   &Y^j_t=-\frac{(H^j_t)^2}2+H^j_t\tilde Z^j_t\\
&(H^j_t)^2=\omega^j _0+\omega^j_1(H^j_{t-1})^2+\omega^j _2(Y^j_{t-1})^2\\
&\tilde Z^j_{t}\sim N(0,1)\text{ i.i.d.}
  \end{aligned}
$$
and, then $ S^j_t=e^{X^j_t}$ is an $\mathcal F^j_t$-martingale.\\

We now define the joint probability $\mathbb Q$ on $\left
(\Omega,\mathcal F^X_N\right )$  as
$$\mathbb Q\left (Y^1_t\leq y^1,\ldots ,Y^m_t\leq y^m\vert\mathcal F^X_{t-1}\right )=
C_{t\vert\mathcal F^X_{t-1}}(G_{t-1}^1(y^1),\ldots
,G_{t-1}^m(y^m))$$
where $G^j_{t-1}(y^j)=\mathbb Q^j(Y^j_t\leq y^j\vert Y^j_1=z^j_1,\ldots ,Y^j_{t-1}=z^j_{t-1})$.\\
$\mathbb Q$ is equivalent to $\mathbb P$ and the distribution with
respect to $\mathbb Q$ of the multidimensional process
$\mathbf{X}$ continues to satisfy the no-Granger causality
assumption. Since each $S^j_t$ is a $\mathcal F^j_t$-martingale
with respect to $\mathbb Q$, then it is a $\mathcal
F^X_t$-martingale as well.
\end{example}

\section{Time Changed Multidimensional Martingale Processes}\label{tc}
In this section we will consider the case of multidimensional
time-changed stochastic processes. We will develop the time change
approach in two directions. The first one aims at providing a
technique to build martingales with respect to the filtration
generated by the whole multidimensional process, given that they
satisfy the martingale property with respect to their natural
filtration. 
The second direction gives a technique to construct new Granger
independent increments processes: this is obtained assuming that
the processes have stationary increments and that the stochastic
clock has Granger independent increments as well.
\medskip

In what follows, we will represent the multidimensional stochastic
clock as an $\mathbb R^m$-valued process
$\mathbf{T}=(\mathbf{T}_t)_{t\in \mathbb N }=(T_t^1,\ldots
,T_t^m)_{t\in \mathbb N }$ such that $\mathbf{T}_0=(0,\ldots ,0)$
and such that each component $T^j$ is an increasing
process.\smallskip

Given an $\mathbb R^m$-valued stochastic process $\mathbf{Y}=\left
(\mathbf{Y}_t\right )_{t\in\mathbb N}=(Y^1_t,\ldots
,Y^m_t)_{t\in\mathbb N}$ we denote with
$\mathbf{Y}_T=(Y^1_{T^1},\ldots ,Y^m_{T^m})$ the corresponding
time changed process whose components are the stochastic processes
$\left (Y^j_{T^j_t}\right )_{t\in \mathbb N }$.

\begin{lemma}\label{bye}
Let $\mathbf{X}=(X^1,\ldots ,X^m)$ be a multidimensional
stochastic process such that each one-dimensional stochastic
process is not Granger caused by the others and $\mathbf{T}$ be a
multidimensional stochastic clock. We assume $\mathbf{T}$
independent of $\mathbf{X}$. Then
\begin{equation*}
\mathbb P(X^j_{T^j_{s+1}}-X^j_{T^j_s}\leq x\vert\mathcal
F_s^{X_T,T})=\mathbb P(X^j_{T^j_{s+1}}-X^j_{T^j_s}\leq x \vert
\mathcal F^{X^j_{T^j},T}_s).
\end{equation*}
where $\left (\mathcal F_s^{X_T,T}\right )$ is the filtration
generated by the $2m$-dimensional stochastic process
$(\mathbf{X}_{T_s},\mathbf{T}_s)$ and $\left(\mathcal
F_s^{X^j_{T^j},T}\right )$ that generated by the $m+1$-dimensional
process $(X^j_{T^j_s}, \mathbf{T}_s)$.

\end{lemma}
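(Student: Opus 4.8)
The plan is to strip away the randomness of the clock first, by exploiting its independence from $\mathbf X$, and then to read the remaining statement as a multi-step version of the no-Granger-causality in Definition~\ref{granger}. Since the target event involves $T^j_{s+1}$, which is not $\mathcal F^{X_T,T}_s$-measurable, I would first enlarge the conditioning by $\mathbf T_{s+1}$ and use the tower property,
\begin{equation*}
\mathbb P\big(X^j_{T^j_{s+1}}-X^j_{T^j_s}\le x\mid\mathcal F^{X_T,T}_s\big)=\mathbb E\big[\mathbb P\big(X^j_{T^j_{s+1}}-X^j_{T^j_s}\le x\mid\mathcal F^{X_T,T}_s\vee\sigma(\mathbf T_{s+1})\big)\mid\mathcal F^{X_T,T}_s\big].
\end{equation*}
Because $\mathbf T\perp\mathbf X$, the inner probability is obtained by a freezing argument: conditioning on the clock substitutes the now deterministic times $\tau^i_r=T^i_r$ into the law of the untimed process, so it equals, at the observed clock,
\begin{equation*}
\phi(\boldsymbol\tau)=\mathbb P\big(X^j_{\tau^j_{s+1}}-X^j_{\tau^j_s}\le x\mid\mathcal G\big),\qquad\mathcal G=\sigma\big(\{X^i_{\tau^i_r}\}_{i,\,r\le s}\big),
\end{equation*}
computed under the law of $\mathbf X$ alone. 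Writing $\mathcal H=\sigma(\{X^j_{\tau^j_r}\}_{r\le s})\subseteq\mathcal G$, everything reduces to showing that $\phi(\boldsymbol\tau)$ is in fact $\mathcal H$-measurable: once this holds, re-integrating over $\mathbf T_{s+1}$ yields a quantity measurable with respect to $\sigma(X^j_{T^j_s},\mathbf T_s)=\mathcal F^{X^j_{T^j},T}_s$, which is the right-hand side.

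For the core reduction I would first prove a multi-step strengthening of Definition~\ref{granger}. Setting $a=\tau^j_s<b=\tau^j_{s+1}$, an induction on $b-a$ combining the one-step identity~(\ref{inc1}) with the tower property gives, for every bounded $\psi$,
\begin{equation*}
\mathbb E\big[\psi(X^j_{a+1},\ldots,X^j_b)\mid\mathcal F^X_a\big]=\mathbb E\big[\psi(X^j_{a+1},\ldots,X^j_b)\mid\mathcal F^j_a\big];
\end{equation*}
that is, the entire future path of $X^j$, conditioned on the full history up to $a$, is driven only by the own history $\mathcal F^j_a$. In particular the conditional law of $X^j_b-X^j_a$ is unaffected by the values of the other coordinates recorded up to time $a$, which disposes of the part of $\mathcal G$ lying inside $\mathcal F^X_a$.

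The main obstacle is that the sampling times need not be synchronized: for $i\ne j$ the clock value $\tau^i_s$ may exceed $a$ and even fall inside the increment window $(a,b)$, so that $\mathcal G$ encodes values of $X^i$ that are \emph{contemporaneous with} the increment $X^j_b-X^j_a$ rather than prior to it. The forward argument above does not reach these terms, and removing them is precisely the delicate step: one must establish the conditional independence
\begin{equation*}
(X^j_b-X^j_a)\ \perp\ \{X^i_{\tau^i_r}\}_{i\ne j,\,r\le s}\ \mid\ \mathcal H,
\end{equation*}
so that $\phi(\boldsymbol\tau)$ collapses to an $\mathcal H$-measurable function. This is the point at which the hypothesis that \emph{every} component is not Granger caused by the others is indispensable, and it is where the proof requires the most care; the freezing and the inductive extension of~(\ref{inc1}) are comparatively routine.
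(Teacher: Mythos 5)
Your skeleton is in fact the paper's own argument: the published proof conditions on the trajectories $\mathbf X_{\mathbf T_r}={\bf x}_r$, $\mathbf T_r={\bf t}_r$, $r\le s$, uses the independence of $\mathbf T$ from $\mathbf X$ to disintegrate over the conditional law of $T^j_{s+1}$, and then writes the integrand directly as $\mathbb P(X^j_v-X^j_{t^j_s}\le x\,\vert\, X^j_{t^j_r}=x^j_r,\ r\le s)$ --- that is, with the sampled values $\{X^i_{t^i_r}\}_{i\ne j}$ \emph{already dropped} from the conditioning; the same computation for the right-hand side yields the identical integral, and the proof stops there. So the ``delicate step'' you isolate, the conditional independence of $X^j_b-X^j_a$ from $\{X^i_{\tau^i_r}\}_{i\ne j,\,r\le s}$ given $\mathcal H$, is precisely what the paper performs silently inside its first display, with no justification beyond the no-Granger hypothesis. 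Your freezing reduction and your multi-step extension of (\ref{inc1}) by induction are correct (the induction does go through: the inner conditional expectation is an $\mathcal F^j_{a+1}$-measurable functional, and one more application of the one-step identity integrates it against the $\mathcal F^j_a$-conditional law of $X^j_{a+1}$). But a proof that ends by declaring its crux ``the point where the proof requires the most care'' has not proved the lemma: everything you actually establish is the routine part, and the deferred step is the entire content.

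Moreover, your worry about asynchronous sampling is a genuine obstruction, not a technicality: the deferred conditional independence does not follow from the stated hypotheses. Take $m=2$ and $\Delta X^1_t=\Delta X^2_t=Z_t$ with $(Z_t)$ i.i.d.\ standard normal, so that each component is not Granger caused by the other (each increment is independent of the entire past), and let the clock be deterministic, $T^1_s=s$, $T^2_s=6s$, hence independent of $\mathbf X$. Then $X^1_{T^1_{s+1}}-X^1_{T^1_s}=Z_{s+1}$, while $\mathcal F^{X_T,T}_s$ contains $X^2_{6s}-X^1_s=Z_{s+1}+\cdots+Z_{6s}$, which is correlated with $Z_{s+1}$; the right-hand side of the lemma is instead the unconditional standard normal law, so the asserted identity fails. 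The difficulty is exactly the one you name: for $i\ne j$ the times $\tau^i_r$ may fall inside or beyond the increment window $(\tau^j_s,\tau^j_{s+1}]$, where no-Granger causality --- a statement about conditioning on the \emph{common past} --- says nothing, and contemporaneous dependence of the increments (which the GIMP framework expressly allows) transmits information. Closing the gap requires an assumption absent from the statement, e.g.\ synchronized clock components $T^1=\cdots=T^m$ (together with Granger-independent increments, as in Remark \ref{reeem}) or outright independence of $X^j$ from the other components. In short: your reduction faithfully reproduces the paper's route, and your diagnosis of the missing step is more candid than the paper's silent treatment of it, but as submitted your proposal is incomplete at that step --- and no argument from the stated hypotheses alone can complete it.
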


\begin{proof} For $s\in\mathbb N$ we have
$$\begin{aligned}
&\mathbb P(X^j_{T^j_{s+1}}-X^j_{T^j_s}\leq x\vert \mathbf
{X}_{\mathbf{T}_r}=
{\bf x}_r,\mathbf{T}_r={\bf t}_r,\, r\in\mathbb N,\,r\leq s)=\\
&=\int_{t_s^j}^{+\infty}\mathbb P(X^j_{v}-X^j_{t^j_s}\leq x \vert
X^j_{t^j_r}=x^j_r,\,r\in\mathbb N,\,r\leq s
)\\
&d\mathbb P(T_{s+1}^j\leq v\vert\mathbf{T}_r={\bf t}_r,\,
r\in\mathbb N,\,r\leq s)
\end{aligned}$$
On the other hand
$$\begin{aligned}
   &\mathbb P(X^j_{T^j_{s+1}}-X^j_{T^j_s}\leq x \vert X^j_{T^j_r}=x^j_r,\mathbf{T}_r={\bf t}_r,\, r\in\mathbb N,\,r\leq s)=\\
&=\int_{t^j_s}^{+\infty} \mathbb P(X^j_v-X^j_{t_s^j}\leq x\vert
X^j_{t^j_r}=x^j_r,\,r\in\mathbb N,r\leq s) d\mathbb
P(T_{s+1}^j\leq v\vert \mathbf{T}_r={\bf t}_r,\,r\in\mathbb
N,r\leq s)
  \end{aligned}$$
and the thesis follows.
\end{proof}
\begin{remark}\label{reeem}
If we assume that the multidimensional process $\mathbf X$ has
Granger independent increments it is trivial to show that the
conclusion is
\begin{equation}\label{lemmaind}
\mathbb P(X^j_{T^j_{s+1}}-X^j_{T^j_s}\leq x\vert\mathcal
F_s^{X_T,T})=\mathbb P(X^j_{T^j_{s+1}}-X^j_{T^j_s}\leq x \vert
\mathcal F^{T}_s).
\end{equation}

\end{remark}\bigskip

\begin{proposition}\label{prop1}
Let $\mathbf S$ be a multivariate process such that each component
is a positive martingale with respect
 to its natural filtration and it is
not Granger caused by the others. Let $\mathbf T$ be a
multidimensional stochastic clock independent of $\mathbf S$.
Then, the time changed stochastic process $\mathbf {S}_{\mathbf
T}$ is a martingale with respect to its natural filtration.
\end{proposition}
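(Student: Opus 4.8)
The plan is to verify the martingale property componentwise, i.e.\ to show that for each $j$ and each $s\in\mathbb N$ one has $\mathbb E[S^j_{T^j_{s+1}}-S^j_{T^j_s}\mid\mathcal F^{S_T}_s]=0$, where $\mathcal F^{S_T}_s$ is the natural filtration of the time-changed vector $\mathbf S_{\mathbf T}$. The central idea is to first enlarge the conditioning $\sigma$-algebra so that it also records the clock, computing the conditional expectation with respect to $\mathcal F^{S_T,T}_s$, and then to descend back to $\mathcal F^{S_T}_s\subseteq\mathcal F^{S_T,T}_s$ by the tower property at the very end. Before anything else I would note that the required integrability is free: positivity of each $S^j$ together with independence of $\mathbf T$ gives $\mathbb E[S^j_{T^j_{s+1}}]=\mathbb E[\mathbb E[S^j_{T^j_{s+1}}\mid\mathbf T]]=\mathbb E[S^j_0]<\infty$, since conditionally on a clock value the martingale $S^j$ is evaluated at a deterministic time.

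The first real step is to reduce the multivariate conditioning to a univariate one. Since each component of $\mathbf S$ is not Granger caused by the others, Lemma~\ref{bye} applies verbatim with $\mathbf S$ in place of $\mathbf X$, giving
$$\mathbb P(S^j_{T^j_{s+1}}-S^j_{T^j_s}\leq x\mid\mathcal F^{S_T,T}_s)=\mathbb P(S^j_{T^j_{s+1}}-S^j_{T^j_s}\leq x\mid\mathcal F^{S^j_{T^j},T}_s).$$
Because the two conditional laws of the increment coincide, so do the corresponding conditional expectations, and the problem collapses to evaluating $\mathbb E[S^j_{T^j_{s+1}}-S^j_{T^j_s}\mid\mathcal F^{S^j_{T^j},T}_s]$, which involves only the history of the single time-changed coordinate $S^j_{T^j}$ and the clock $\mathbf T$.

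Next I would compute this univariate conditional expectation using the integral representation already exhibited in the proof of Lemma~\ref{bye}. Conditioning on $T^j_{s+1}=v$ (a deterministic time) and on the sampled past $\{S^j_{t^j_r}=s^j_r:r\leq s\}$, the inner term is $\mathbb E[S^j_v-S^j_{t^j_s}\mid S^j_{t^j_r}=s^j_r,\,r\leq s]$. Here the martingale hypothesis on $S^j$ enters: since $v\geq t^j_s$, the tower property gives $\mathbb E[S^j_v\mid\mathcal F^{S^j}_{t^j_s}]=S^j_{t^j_s}$, and conditioning further down to the coarser $\sigma$-algebra generated by the sampled values leaves $S^j_{t^j_s}=s^j_s$ unchanged because it is measurable with respect to it; hence the inner term vanishes for every admissible $v$. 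Integrating the identically zero integrand against the conditional law of $T^j_{s+1}$ yields $\mathbb E[S^j_{T^j_{s+1}}-S^j_{T^j_s}\mid\mathcal F^{S^j_{T^j},T}_s]=0$, and therefore also $\mathbb E[S^j_{T^j_{s+1}}-S^j_{T^j_s}\mid\mathcal F^{S_T,T}_s]=0$.

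Finally, applying the tower property with $\mathcal F^{S_T}_s\subseteq\mathcal F^{S_T,T}_s$ gives $\mathbb E[S^j_{T^j_{s+1}}-S^j_{T^j_s}\mid\mathcal F^{S_T}_s]=0$, which is the claim. I expect the main obstacle to be the legitimacy of interchanging the conditional expectation with the integration over the clock: this rests on the independence of $\mathbf T$ and $\mathbf S$, which makes the clock's conditional law factor out, and on positivity of $\mathbf S$, which lets Tonelli's theorem justify the exchange without integrability worries. A secondary subtlety, dispatched above by a tower-property argument, is that the representation conditions on the finitely many sampled values $S^j_{t^j_r}$ rather than on the full natural filtration of $S^j$, so one must make sure the martingale property still delivers a zero increment under this coarser conditioning.
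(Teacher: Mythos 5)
Your proof is correct and follows essentially the same route as the paper's: enlarge the conditioning to $\mathcal F^{S_T,T}_s$, invoke Lemma \ref{bye} to reduce to the univariate-plus-clock conditioning, disintegrate over the clock via independence so that the univariate martingale property annihilates the inner integrand, and descend to $\mathcal F^{S_T}_s$ by the tower property. The only cosmetic differences are that you apply the lemma directly to $\mathbf S$ where the paper passes to log-prices $\mathbf X=\ln \mathbf S$ and writes the increment as $S^j_{T^j_s}\bigl(e^{X^j_{T^j_{s+1}}-X^j_{T^j_s}}-1\bigr)$, and that you spell out the integrability and the sampled-values coarsening step that the paper leaves implicit.
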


\begin{proof}
Let $\mathbf X_t=(\ln (S^1_t),\ldots ,\ln (S^m_t))$.\\
By the hypotheses, for all $v\in\mathbb N$, $\mathbb E\left [\left
.e^{X^j_{v+1}-X^j_v}\right\vert\mathcal F^j_v\right ]=1$ and
thanks to the above Lemma, if $s\in\mathbb N$,
$$
\mathbb E\left [S^j_{T^j_{s+1}}-S^j_{T^j_s}\vert \mathcal
F^{X_T,T}_s\right ] =S^j_{T^j_s}\mathbb E\left
[e^{X^j_{T_{s+1}^j}-X^j_{T_s^j}}-1\vert \mathcal F^{X_T,T}_s\right
] =S^j_{T^j_s}\mathbb E\left
[e^{X^j_{T_{s+1}^j}-X^j_{T_s^j}}-1\vert \mathcal
F^{X^j_{T^j},T}_s\right ].$$ Since
$$\begin{aligned}
&\mathbb E\left [e^{X^j_{T_{s+1}^j}-X^j_{T_s^j}}-1\vert
X^j_{T^j_r}=x^j_r,\mathbf{T}_r={\bf t}_r,\, r\in\mathbb N,\,r\leq
s
\right ]=\\
&=\int_{t_s^j}^{+\infty}\mathbb E\left
[e^{X^j_{u}-X^j_{t^j_s}}-1\vert X^j_{T^j_r}=x^j_r,\, r\in\mathbb
N,\,r\leq s \right ] d\mathbb P(T^j_{s+1}\leq
u\vert\mathbf{T}_r={\bf t}_r,\, r\in\mathbb N,\,r\leq s)=0,
\end{aligned}$$
each $S_{T^j}^j$ is a martingale with respect to the filtration
$\left (\mathcal F^{X_T,T}_t\right )$ and hence it is a martingale
with respect to the smaller filtration $\left (\mathcal
F^{X_T}_t\right )$ as required.

\end{proof}

Until now we have studied under which conditions the martingale
property is preserved through the time change. A natural question
now takes us to the second question. What kind of dependence will
be satisfied by the increments of the time changed process? We see
below that, given that $X$ has Granger independent increments,
adding the assumption of marginal stationary increments and of a
Granger independent clock yields the result and that under these
assumptions the time changed process exhibits Granger independent
increments.

\begin{proposition}\label{sab3}
Let $\mathbf X=(X^1,\ldots ,X^m)$ be a multidimensional stochastic
process with Granger independent and stationary marginal
increments and $\mathbf T$ be a multidimensional stochastic clock
with Granger-independent increments independent of $\mathbf X$.
Then the time changed process $\mathbf {X_T}$ has Granger
independent increments.
\end{proposition}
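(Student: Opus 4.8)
The plan is to verify the defining identity for Granger independent increments of the time changed process, that is, to show that for each $j$ and each $s\in\mathbb N$
$$\mathbb P(X^j_{T^j_{s+1}}-X^j_{T^j_s}\leq x\vert\mathcal F^{X_T}_s)=\mathbb P(X^j_{T^j_{s+1}}-X^j_{T^j_s}\leq x).$$
The strategy is to first prove the stronger statement in which the conditioning is with respect to the larger filtration $\mathcal F^{X_T,T}_s$, and then to descend to $\mathcal F^{X_T}_s$ by the tower property. Note that Granger independence of the increments of $\mathbf X$ implies in particular that each component is not Granger caused by the others, so Lemma \ref{bye} and Remark \ref{reeem} apply. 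First I would invoke Remark \ref{reeem}: since $\mathbf X$ has Granger independent increments, (\ref{lemmaind}) holds, so the conditional law given the full $2m$-dimensional history collapses onto the clock filtration, and reading off the integral representation in the proof of Lemma \ref{bye} gives
$$\mathbb P(X^j_{T^j_{s+1}}-X^j_{T^j_s}\leq x\vert\mathcal F^{T}_s)=\int_{t^j_s}^{+\infty}\mathbb P(X^j_v-X^j_{t^j_s}\leq x)\,d\mathbb P(T^j_{s+1}\leq v\vert\mathbf T_r={\bf t}_r,\,r\leq s),$$
where the inner probability is already unconditional precisely because Granger independence of the increments of $\mathbf X$ detaches the increment from the past levels of $\mathbf X$.

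The crux is then to remove every remaining trace of the conditioning information, and here both extra hypotheses enter. I would use stationarity of the marginal increments to write $\mathbb P(X^j_v-X^j_{t^j_s}\leq x)=\Phi^j(x,v-t^j_s)$ for a function $\Phi^j$ depending only on the elapsed clock lag; the change of variable $w=v-t^j_s$ then converts the integral into one against the law of the clock increment $\Delta T^j_s=T^j_{s+1}-T^j_s$ conditional on $\mathcal F^T_s$. At this point the Granger independence of the clock increments gives $\mathbb P(\Delta T^j_s\leq w\vert\mathcal F^T_s)=\mathbb P(\Delta T^j_s\leq w)$, so the integral becomes $\int_0^{+\infty}\Phi^j(x,w)\,d\mathbb P(\Delta T^j_s\leq w)$, which no longer depends on the conditioning. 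A conditional probability that is almost surely equal to a constant equals the corresponding unconditional probability, so, combining with (\ref{lemmaind}),
$$\mathbb P(X^j_{T^j_{s+1}}-X^j_{T^j_s}\leq x\vert\mathcal F^{X_T,T}_s)=\mathbb P(X^j_{T^j_{s+1}}-X^j_{T^j_s}\leq x).$$
Finally, since $\mathcal F^{X_T}_s\subseteq\mathcal F^{X_T,T}_s$, taking the conditional expectation of both sides given $\mathcal F^{X_T}_s$ and invoking the tower property yields the claimed identity, which is exactly Granger independence of the increments of $\mathbf X_T$.

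I expect the main obstacle to be the bookkeeping in the second paragraph rather than any deep difficulty: one must check that stationarity is exactly what makes the integrand depend on $t^j_s$ only through the lag $v-t^j_s$, so that the absolute, $\mathcal F^T_s$-measurable value $t^j_s$ disappears, and that Granger independence of the clock is exactly what strips the conditioning from the law of $\Delta T^j_s$. Neither hypothesis alone suffices: without stationarity the integrand would retain a genuine dependence on the $\mathcal F^T_s$-measurable level $t^j_s$, and without the clock's Granger independence the mixing measure $d\mathbb P(\Delta T^j_s\leq w\vert\mathcal F^T_s)$ would still carry the conditioning. A minor technical point is the passage from the scalar identity $\mathbb P(\,\cdot\,\leq x\vert\mathcal F)=\mathbb P(\,\cdot\,\leq x)$ to the statement that the entire conditional distribution coincides with the unconditional one, which is routine under the standing regularity assumptions of the model.
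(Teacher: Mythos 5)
Your proof is correct and follows the paper's argument essentially step for step: Remark \ref{reeem} collapses the conditioning to the clock filtration, stationarity of the marginal increments plus Granger independence of the clock increments turn the conditional law of $X^j_{T^j_{s+1}}-X^j_{T^j_s}$ given $\mathcal F^{X_T,T}_s$ into a deterministic constant, and the tower property descends to $\mathcal F^{X_T}_s$. The only difference is a minor streamlining: where the paper establishes (\ref{ciao}) by separately computing the unconditional law (disintegrating over the joint law of $(T^j_{s+1}-T^j_s,\,T^j_s)$ and factorizing it using the clock's Granger independence), you simply take expectations of the constant conditional probability to identify it with the unconditional one, which spares that second computation.
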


\begin{proof}
We start showing that, for $s\in\mathbb N$,
\begin{equation}\label{ciao}
 \mathbb P(X^j_{T^j_{s+1}}-X^j_{T^j_s}\leq x\vert \mathcal F^T_s)=\mathbb P(X^j_{T^j_{s+1}}-X^j_{T^j_s}\leq x)
\end{equation}
In fact,
$$\begin{aligned}
  &\mathbb P(X^j_{T^j_{s+1}}-X^j_{T^j_s}\leq x\vert \mathbf T_r={\bf t}_r,\,r\in\mathbb N,r\leq s)=\\
&=\int_0^{+\infty}\mathbb P(X^j_{t^j_s+u}-X^j_{t_s^j}\leq x\vert T^j_{s+1}-T^j_s=u, \mathbf T_r={\bf t}_r,\,r\in\mathbb N,r\leq s)\cdot\\
&\cdot d\mathbb P(T^j_{s+1}-T^j_s\leq u\vert\mathbf T_r={\bf t}_r,\,r\in\mathbb N,r\leq s)=\\
  &=\int_0^{+\infty}\mathbb P(X^j_{t_s^j+u}-X^j_{t_s^j}\leq x)
dF_{T^j_{s+1}-T^j_s}(u)=\int_0^{+\infty}\mathbb P(X^j_{u}\leq x)
dF_{T^j_{s+1}-T^j_s}(u),
  \end{aligned}$$
thanks to marginal stationarity and
$$\begin{aligned}
  &\mathbb P(X^j_{T^j_{s+1}}-X^j_{T^j_s}\leq x)=\\
&=\int_0^{+\infty}\int_0^{+\infty} \mathbb P(X^j_{v+u}-X^j_{v}\leq
x\vert T^j_{s+1}-T^j_s=u, T^j_s=v)
dF_{T^j_{s+1}-T^j_s, T^j_s}(u,v)=\\
&=\int_0^{+\infty}\int_0^{+\infty}\mathbb P(X^j_{v+u}-X^j_{v}\leq
x) dF_{T^j_{s+1}-T^j_s}(u)dF_{T^j_s}(v)
=\int_0^{+\infty}\mathbb P(X^j_{u}\leq x)dF_{T^j_{s+1}-T^j_s}(u)
  \end{aligned}$$
where $F_{T^j_{s+1}-T^j_s, T^j_s}(u,v)$ denotes the joint
cumulative distribution function of $(T^j_{s+1}-T^j_s, T^j_s)$ and
$F_{T^j_{s+1}-T^j_s}(u)$ the cumulative distribution function of $T^j_{s+1}-T^j_s$.\\
Hence (\ref{ciao}) is proved. \\
Now, by \text{ by (\ref{lemmaind})} and \text{ by (\ref{ciao})}
$$\begin{aligned}
\mathbb P(X^j_{T^j_{s+1}}-X^j_{T^j_s}\leq x\vert \mathcal
F^{X_T}_s) &=\mathbb E\left[\left .\mathbb
P(X^j_{T^j_{s+1}}-X^j_{T^j_s}\leq x\vert \mathcal
F^{X_T,T}_s)\right\vert\mathcal F^{X_T}_s\right ]
=\mathbb P(X^j_{T^j_{s+1}}-X^j_{T^j_s}\leq x)
\end{aligned}$$
and this trivially implies that $X_T$  has Granger independent
increments.
\end{proof}

\section{Conclusion}\label{conclusione}
In this paper we address the main flaw of copula function applications to the evaluation of multivariate equity derivatives, namely the fact that copula functions are intrinsically static objects that are used to link variables, rather than processes. As a result, in standard copula functions applications it is impossible to impose consistency relationships among prices of the same product for different maturities. The consistency required has to do with the martingale condition and the dependence structure among the levels. Particularly, the dependence structure of the levels at different time horizons must be made consistent with the dependence structure of the increments, which provides the characteristic feature of any multivariate stochastic process. \\
Here we propose and study a new class of processes for which
copula functions can be applied ensuring the intertemporal
consistency of prices. The requirements imposed to this class of
processes are that no process can be Granger-caused by any of the
others, and that each process is an univariate martingale with
independent increments: this class of processes is called GIMP
(Granger Independent Martingale Processes). The impact of a stochastic clock on the Granger causality assumption is analyzed and conditions under which this is preserved introduced.

\end{document}